

\documentclass[
    ,final            
  ]
  {aipproc}

\layoutstyle{8x11single}

\usepackage{amsmath}
\usepackage{amsfonts}
\usepackage{amssymb}
\usepackage{amsthm}
\usepackage{newlfont}
\usepackage{graphicx}
\usepackage{amscd}
\usepackage{accents}

\theoremstyle{plain}
\newtheorem{Th}{Theorem}

\newtheorem{Prop}[Th]{Proposition}
\theoremstyle{definition}
\newtheorem{Def}{Definition}
%
\theoremstyle{remark}
\newtheorem*{Rem}{Remark}
%

\newcommand{\EE}{{\mathbb E}}

\newcommand{\DD}{{\mathbb D}}
\newcommand{\FF}{{\mathbb F}}

\newcommand{\ZZ}{{\mathbb Z}}



\newcommand{\bphi}{\mathbf{\phi}}


\begin{document}

\title{Desargues maps and their reductions}

\classification{02.30.Ik, 05.45.Yv, 02.40.Dr, 04.60.Nc}
\keywords      {Discrete integrable systems; incidence geometry; Hirota equation; Miwa equation; discrete CKP equation; lattice Gel'fand--Dikii systems; non-isospectral integrable systems}

\author{Adam Doliwa}{
  address={Faculty of Mathematics and Computer Science, University of Warmia and Mazury in Olsztyn,
ul.~S{\l}oneczna~54, 10-710~Olsztyn, Poland 
}
}

\begin{abstract}
We present recent developments on geometric theory of the Hirota system and of the non-commutative discrete Kadomtsev--Petviashvili (KP) hierarchy adding also some new results which make the picture more complete. We pay special attention to multidimensional consistency of the Desargues maps and of the resulting non-linear non-commutative systems. In particular, we show three-dimensional consistency of the non-commutative KP map in its edge formulation. We discuss also relation of Desargues maps and quadrilateral lattice maps. We study from that point of view reductions of the Hirota system to discrete $B$-KP and $C$-KP systems presenting also a novel constraint which leads to the Miwa equations. By imposing periodicity reduction of the discrete KP hierarchy we obtain non-isospectral versions of the modified lattice Gel'fand--Dikii equations. To close the picture from below, we apply additional self-similarity constraint on the non-isospectral non-autonomous modified lattice Korteweg--de~Vries system to recover known $q$-Painleve equation of type $A_2+A_1$.
\end{abstract}

\maketitle


\section{Introduction}
Discrete integrable systems play at present the key role in the whole integrability theory. According to Kruskal 
\cite{GrammaticosRamani-INS}: "For years we have been thinking that the integrable evolution equations were the fundamental ones. It is becoming clear now that the fundamental objects are the integrable \emph{discrete} equations." Among discrete integrable systems the discrete KP equation proposed by Hirota~\cite{Hirota} takes a particular position. It was originally designated as a discrete version of the two dimensional Toda system~\cite{Mikhailov}, but it turned out \cite{Miwa} that it has a profound relation with the whole KP hierarchy of integrable equations. Moreover, under the name of $T$- or $Y$-system it plays an important role in solvable lattice models of quantum physics and statistical mechanics~\cite{KNS-rev}; see also \cite{Zabrodin} for a review of various properties of its classical version. Recently, another impetus to study the Hirota equation came from combinatorics \cite{Knutson}, where it is known as the octahderon recurrence. 

Already in works Darboux~\cite{Darboux-OS,DarbouxIV} one can find a geometric meaning of two dimensional Toda system as  equations governing projective invariants of the so called Laplace transformations of conjugate nets on surfaces. A transition of the  geometric picture to a discrete level~\cite{DCN} produces again the Hirota system, and leads to consideration of multidimensional lattices of planar quadrilaterals~\cite{MQL}, which are discrete analogs of multidimensional conjugate nets. Quite recently, it turned out~\cite{Dol-Des} that the theory of such quadrilateral lattices, where the underlying geometric constraint is coplanarity of four points, forms a part of the theory of Desargues maps with the constraint being just collinearity of three points. Such property looks rather trivial, but when the combinatorics of the points is prescribed~\cite{Dol-AN} according to the structure of the $A$-type root lattice, it leads to the Hirota system in its non-commutative version~\cite{Nimmo-NCKP}, we remark that non-commutative (matrix valued) equations of Hirota type appeared earlier in~\cite{LeviPilloniSantini,FWN}. 

Non-commutative integrable systems~\cite{Kupershmidt} can be considered as the second extreme point, the first being the systems with commuting dependent variables, with the quantum integrable systems in the middle. We will not consider here quantum integrability properties of the Hirota equation, but we refer to above mentioned review~\cite{KNS-rev}, the paper \cite{KashaevReshetikhin} and our previous work~\cite{DoliwaSergeev-pentagon,Dol-WCR-Hirota}. For quantum systems related to quadrilateral lattice maps see~\cite{BaMaSe,Sergeev-q3w}.  

In the present paper, after presenting the Desargues maps and their relation with the (non-commutative) Hirota system and discrete KP hierarchy, we concentrate on connection of the above to theory of quadrilateral lattice maps. In partcular, we offer a point of view on the discrete $B$-KP and $C$-KP equations from that perspective. Then we study periodic reductions of the Desargues maps which provide geometric meaning to the lattice Gel'fand--Dikii systems. Our approach makes clear the appearance of arbitrary functions of single variables in the systems and allows to introduce one more function related to non-isospectrality of the corresponding linear problems. Finally, we conclude the reduction procedure by recovering known $q$-Painleve equation of type $A_2+A_1$.

\section{Desargues maps}
\label{sec:Desargues}
\subsection{Desargues maps and the non-commutative Hirota system}

Consider~\cite{ConwaySloane} the $\widehat{N}$-dimensional root lattice $Q(A_{\widehat{N}})$ 
as generated by vectors along the edges of regular $\widehat{N}$-simplex in the $\widehat{N}$-dimensional euclidean space 
$\EE^{\widehat{N}}$. 
The \emph{holes} of the lattice are the points of the ambient space that are
locally maximally distant from the lattice. The convex hull of the lattice
points closest to a hole is called the Delaunay polytope. Among the Delaunay polytopes of the root lattice $Q(A_{\widehat{N}})$
are the so called basic $\widehat{N}$-simplices, which are translates of the initial $\widehat{N}$-simplex. 

Denote by ${\mathbb P}^M({\mathbb D})$, $M$-dimensional right projective space over a division ring $\DD$.
\begin{Def}
Desargues maps are defined~\cite{Dol-Des,Dol-AN} as maps $\Phi:Q(A_{\widehat{N}})\to{\mathbb P}^M({\mathbb D})$, $M\geq 2$, such that vertices of each basic $\widehat{N}$-simplex are mapped into collinear points, see Figure~\ref{fig:Desargues-map}. 
\end{Def}
\begin{figure}
\includegraphics[width=6cm]{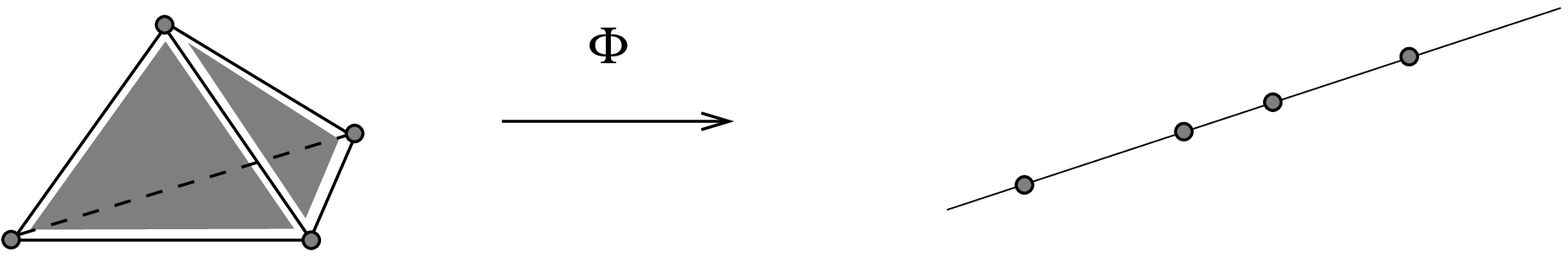}\hskip1.5cm
\includegraphics[width=7cm]{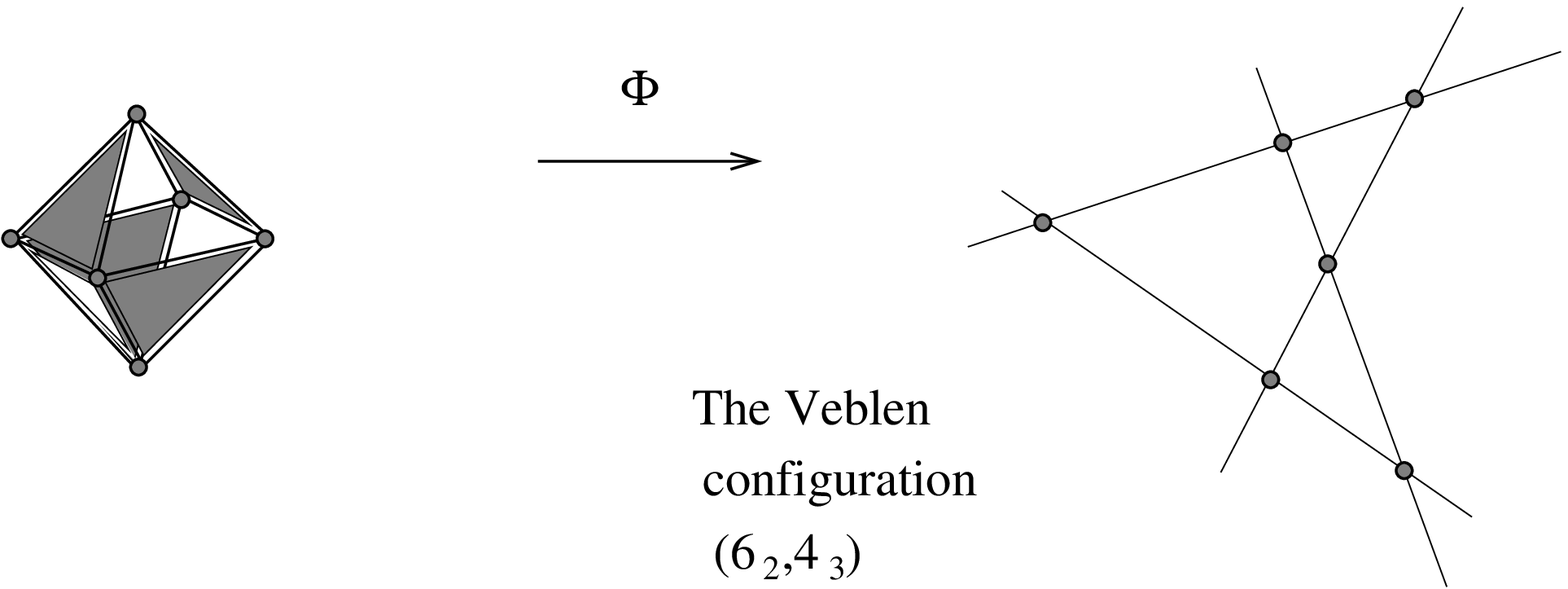}
\caption{Four collinear points as image of vertices of a basic $3$-simplex (left), and the Veblen configuration as the Desargues map image of the octahedron Delaunay polytope (black triangles belong to basic simplices)} 
\label{fig:Desargues-map}
\end{figure}

In the standard realization of $Q(A_{\widehat{N}})$ as sub-lattice of $\ZZ^{\widehat{N}+1}$ characterized by the condition $m_1 + m_2 + \dots + m_{\widehat{N}+1} = 0$, $(m_1,\dots ,m_{\widehat{N}+1})\in\ZZ^{\widehat{N}+1}$ fix the basis 
$\mathbf{\varepsilon}_i = \mathbf{e}_{\widehat{N}+1} - \mathbf{e}_i$, $i=1,\dots \widehat{N}$, of $Q(A_{\widehat{N}})$, where  $\mathbf{e}_j$, $i=1,\dots \widehat{N}+1$ are the elements of the standard basis of $\ZZ^{\widehat{N}+1}$.
After identification of $Q(A_{\widehat{N}})$ with $\ZZ^{\widehat{N}} = \sum_{i=1}^{\widehat{N}} \ZZ\mathbf{\varepsilon}_i$, in suitable gauge \cite{Dol-Des}  the homogeneous coordinates $\bphi:\ZZ^{\widehat{N}}\to\DD^{M+1}$ of Desargues maps satisfy the following linear system
\begin{equation} \label{eq:lin-dKP}
\mathbf{\phi}(\widehat{n}+\mathbf{\varepsilon}_i) - 
\mathbf{\phi}(\widehat{n}+\mathbf{\varepsilon}_j) =  
\mathbf{\phi}(\widehat{n}) 
U_{ij}(\widehat{n}),  \qquad 1\leq i \ne j \leq {\widehat{N}}, \qquad \widehat{n}=(n_1, n_2,\dots , n_{\widehat{N}}) \in\ZZ^{\widehat{N}},
\end{equation}
well known in soliton theory~\cite{DJM-II,Nimmo-NCKP}. In what follows we denote (forward and backward) shifts by (signed) subscripts in round brackets, and we skip the argument $\widehat{n}$, i.e. the linear system above reads
$\mathbf{\phi}_{(i)}- 
\mathbf{\phi}_{(j)} =  
\mathbf{\phi}
U_{ij}$.
Its compatibility condition
\begin{equation} \label{eq:comp-U}
U_{ij} + U_{ji} = 0,  \qquad  U_{ij} + U_{jl} + 
U_{li} = 0, \qquad
U_{lj}U_{li(j)} = 
U_{li} U_{lj(i)}, \qquad  i,j,l \qquad \text{distinct},
\end{equation}
is called the non-commutative Hirota system
\cite{FWN,Nimmo-NCKP}.
Geometrically, equations \eqref{eq:comp-U} describe the fact that vertices of any octahedron Delunay polytope of the root lattice are mapped to the so called Veblen configuration consisting of six points and four lines --- each line is incident with three points, and each point is incident with two lines (see Figure~\ref{fig:Desargues-map}). This point of view was advocated in \cite{Schief-talk} and motivated our definition of Desargues maps.

We remark that when $\DD$ is \emph{commutative} (i.e. a field, and then we write $\FF$ instead of $\DD$) then the functions $U_{ij}$ can be parametrized in terms of a single potential  (the tau-function)
$\tau \colon Q(A_{\widehat{N}})\to\FF$
\begin{equation} \label{eq:U-tau}
U_{ij} = \frac{\tau_{(ij)} \tau }{\tau_{(i)}\tau_{(j)}} = - U_{ji} , \qquad i<j,
\end{equation} 
and remaining equations
 reduce to the celebrated Hirota system \cite{Hirota}
\begin{equation} \label{eq:H-M}
\tau_{(i)}\tau_{(jl)} - \tau_{(j)}\tau_{(il)} + \tau_{(l)}\tau_{(ij)} =0,
\qquad 1\leq i< j < l \leq \widehat{N}.
\end{equation} 
The presented above definition of Desargues maps and the corresponding approach to the Hirota system via the root lattices 
$Q(A_{\widehat{N}})$ exhibits from the very beginning their invariance with respect to the affine Weyl group $W(A_{\widehat{N}})$ which acts on the root lattice (see~\cite{Dol-AN} for detailed discussion).
\begin{figure}
\includegraphics[width=8cm]{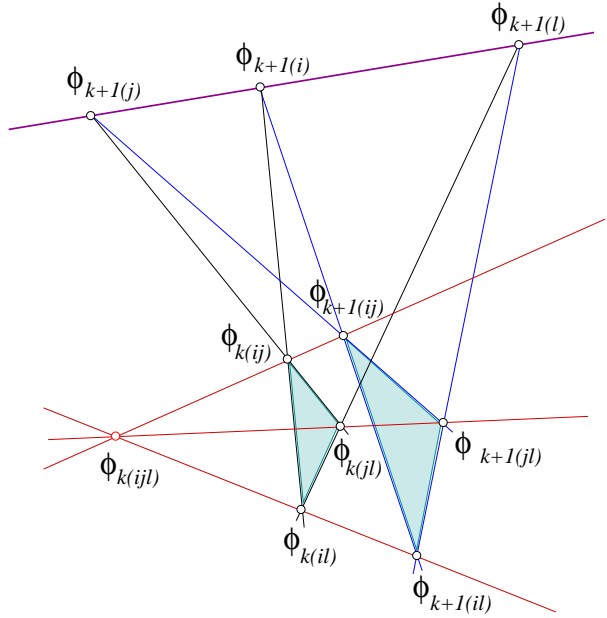}
\caption{Four dimensional consistency of Desargues maps} 
\label{fig:Desargues-phi-k}
\end{figure}

We remark, that after suitable gauge transformation one can derive a non-autonomous version \cite{WTS} of the Hirota system \eqref{eq:H-M} 
\begin{equation} \label{eq:H-M-na}
(\mathcal{A}_j - \mathcal{A}_l) \tau_{(i)}\tau_{(jl)} + (\mathcal{A}_l - \mathcal{A}_i) \tau_{(j)}\tau_{(il)} + (\mathcal{A}_i - \mathcal{A}_j) 
\tau_{(l)}\tau_{(ij)} =0,
\qquad 1\leq i< j < l \leq \tilde{N},
\end{equation}
where $\mathcal{A}_i$ is an arbitrary function of the variable $n_i$, $i=1, \dots ,\widehat{N}$. Such a freedom in the Hirota system will be used to obtain generic non-isospectral form of its quasi-periodic reduction. 

\subsection{The non-commutative KP hierarchy}
To obtain the non-commutative analogue of the Kadomtsev--Petviashvilii (KP) hierarchy \cite{KNY-qKP,Dol-GD} we distinguish the last coordinate $k=n_{\widehat{N}}$, we put $N={\widehat{N}}-1$, and denote $n=(n_1,\dots , n_N)$. If we define
\begin{equation*}
\bphi(n,k) = \bphi_k(n), \qquad U_{{\widehat{N}},i}(n,k) = u_{i,k}(n),
\end{equation*}
then the linear system \eqref{eq:lin-dKP} gives the linear problem 
\begin{equation} \label{eq:lin-KP-phi}
\bphi_{k+1} - \bphi_{k(i)}  = \bphi_{k} u_{i,k} ,\qquad k\in\ZZ, 
 \qquad i=1,\dots ,N,
\end{equation}
and the potentials $u_{i,k}:\ZZ^{N}\to\DD$ satisfy the compatibility conditions
\begin{equation} \label{eq:KP-u}
u_{j,k}u_{i,k(j)}  = u_{i,k} u_{j,k(i)}  , \qquad
u_{i,k(j)}  + u_{j,k+1}  = 
u_{j,k(i)} + u_{i,k+1}, \qquad 1 \leq i\neq j \leq N.
\end{equation}
In consequence we obtain the transformation rule
\begin{equation} \label{eq:KP-u-solved}
u_{i,k(j)} = ( u_{i,k} - u_{j,k})^{-1} u_{i,k} ( u_{i,k+1} - u_{j,k+1}), \qquad i\neq j,
\end{equation} 
which can be written as a non-commutative discrete KP map, see Figure~\ref{fig:3D-GD-u},
\begin{equation*}
(\mathbf{u}_i, \mathbf{u}_j) \mapsto 
(\mathbf{u}_{i(j)}, \mathbf{u}_{j(i)} ) ,
\qquad \mathbf{u}_i = (u_{i,k}), \qquad k\in\ZZ. 
\end{equation*}
In recent studies on discrete integrable systems the property of multidimensional consistency \cite{ABS,FWN-cons} is considered as the main concept of the theory. Roughly speaking, it is the possibility of extending the number of independent variables of a given nonlinear system by adding its copies in different directions without creating this way inconsistency or multivaluedness.
For Desargues maps (in notation of the KP hierarchy) such a problem can occur in construction of the point $\bphi_{k(ijl)}$, which however (by the Desargues theorem \cite{Coxeter}) is the \emph{single} intersection point of three lines $\langle \bphi_{k(ij)} , \bphi_{k+1(ij)} \rangle$, $\langle \bphi_{k(il)} , \bphi_{k+1(il)} \rangle$, and $\langle \bphi_{k(jl)} , \bphi_{k+1(jl)} \rangle$,
see Figure~\ref{fig:Desargues-phi-k}. We would like to stress that this geometric theorem is the source of the multidimensional consistency of other systems described in the paper.

In particular, one can notice, that when we consider system \eqref{eq:KP-u} in three independent variables, there are two ways to obtain $\mathbf{u}_{i,(j l)}$ starting from 
$\mathbf{u}_i$, $\mathbf{u}_j$, $\mathbf{u}_l$ and using the map \eqref{eq:KP-u-solved}. 
\begin{Prop} \label{prop-3D-u-KP}
The KP map \eqref{eq:KP-u-solved} is three dimensionally consistent.
\end{Prop}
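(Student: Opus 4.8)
The plan is to reduce the three-dimensional consistency of the nonlinear KP map to the four-dimensional consistency of the underlying Desargues map, exploiting the geometric remark preceding the statement. Recall that the potentials $u_{i,k}$ are precisely the coefficients appearing in the linear problem \eqref{eq:lin-KP-phi} for the homogeneous coordinates $\bphi_k$ of a Desargues map, with $k=n_{\widehat{N}}$ the distinguished direction; the map \eqref{eq:KP-u-solved} is nothing but the transformation rule for these coefficients under a shift, extracted from the compatibility conditions \eqref{eq:KP-u}. Since each $\mathbf{u}_i=(u_{i,k})_{k\in\ZZ}$ already bundles together the whole $k$-line, three-dimensional consistency of \eqref{eq:KP-u-solved} in the directions $i,j,l$ is the same assertion as four-dimensional consistency of the Desargues map in the directions $i,j,l,k$, illustrated in Figure~\ref{fig:Desargues-phi-k}.

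First I would make the dictionary explicit: starting from $\bphi_k$ together with its single shifts $\bphi_{k(i)},\bphi_{k(j)},\bphi_{k(l)}$ and $\bphi_{k+1}$, the linear problem \eqref{eq:lin-KP-phi} reconstructs, in the fixed gauge, all the potentials $u_{i,k},u_{j,k},u_{l,k}$ entering the map, and conversely the potentials determine the collinearity data. The two ways of producing $\mathbf{u}_{i,(jl)}$ --- shifting first in $j$ then in $l$, or first in $l$ then in $j$ --- correspond under this dictionary to the two orders of constructing the point $\bphi_{k(ijl)}$ and its neighbours.

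The crux is then the well-definedness of $\bphi_{k(ijl)}$. By the Desargues theorem \cite{Coxeter}, as already noted in the text, this point is the \emph{single} intersection of the three lines $\langle\bphi_{k(ij)},\bphi_{k+1(ij)}\rangle$, $\langle\bphi_{k(il)},\bphi_{k+1(il)}\rangle$ and $\langle\bphi_{k(jl)},\bphi_{k+1(jl)}\rangle$, and this intersection manifestly does not depend on the order in which the shifts $i,j,l$ are performed. Because the gauge is fixed, the uniquely determined configuration $\{\bphi_{k(ijl)}\}$ yields uniquely determined potentials, so the two computations of $\mathbf{u}_{i,(jl)}$ agree, proving consistency.

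I expect the main obstacle to be the bookkeeping of this dictionary in the noncommutative setting: one must verify that the recovery of the potentials $u_{i,k}$ from the homogeneous coordinates $\bphi$ in the chosen gauge is genuinely unambiguous (right multiplication by elements of $\DD$), and that degenerate incidence configurations are excluded. An alternative, purely algebraic route is to substitute \eqref{eq:KP-u-solved} into itself in both orders and to verify equality of the two resulting expressions for $u_{i,k(jl)}$ directly from the relations \eqref{eq:KP-u}; there the difficulty is instead the careful manipulation of inverses and products over $\DD$, which the geometric argument bypasses entirely.
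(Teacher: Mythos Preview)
Your geometric argument is sound, but it is \emph{not} the route the paper takes. The paper proves the proposition by a direct algebraic computation: it first records the key identity
\[
( u_{i,k} - u_{j,k}) ( u_{i,k} - u_{l,k})_{(j)} = ( u_{i,k} - u_{l,k}) ( u_{i,k} - u_{j,k})_{(l)},
\]
which one checks by substituting the KP map \eqref{eq:KP-u-solved} and simplifying over $\DD$, and then observes that applying this identity twice (once at level $k$ and once at level $k+1$) reduces $[u_{i,k(j)}]_{(l)}$ and $[u_{i,k(l)}]_{(j)}$ to the same expression. No wave function $\bphi$ and no incidence geometry enter the argument.

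Your approach lifts the question to the four-dimensional consistency of the Desargues map and invokes the Desargues theorem for the uniqueness of $\bphi_{k(ijl)}$; the paper in fact remarks, both before and after the proposition, that this equivalence holds, so your reduction is legitimate. What it buys is a conceptual explanation of \emph{why} consistency holds, at the cost of the bookkeeping you yourself flag: one must argue that arbitrary initial data $\mathbf{u}_i,\mathbf{u}_j,\mathbf{u}_l$ can be realised by a genuine (generic) Desargues map $\bphi$, that the gauge is rigid enough to recover the potentials uniquely from the geometry, and that Desargues' theorem is available over $\DD$ (which it is, for $M\geq 2$). The paper's algebraic proof sidesteps all of this and yields a reusable identity; your geometric proof makes the origin of the consistency transparent but is less self-contained.
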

\begin{proof}
To have to show that two expressions below are equal (we assume that the indices $i$, $j$, $l$ are distinct)
\begin{equation*}
[u_{i,k(j)}]_{(l)} =   [( u_{i,k} - u_{j,k})^{-1} u_{i,k} ( u_{i,k+1} - u_{j,k+1}) ]_{(l)} , \qquad
[u_{i,k(l)}]_{(j)} =   [( u_{i,k} - u_{l,k})^{-1} u_{i,k} ( u_{i,k+1} - u_{l,k+1}) ]_{(j)} .
\end{equation*}
It is convenient to note first the following identity
\begin{equation} \label{eq:identity-3D-u}
( u_{i,k} - u_{j,k}) ( u_{i,k} - u_{l,k})_{(j)} = ( u_{i,k} - u_{l,k}) ( u_{i,k} - u_{j,k})_{(l)} ,
\end{equation}
which can be verified directly using the KP~map~\eqref{eq:KP-u-solved}. The rest are simple algebraic manipulations using twice the above identity (for $k$ and $k+1$).
\end{proof}
\begin{figure}
\includegraphics[width=12cm]{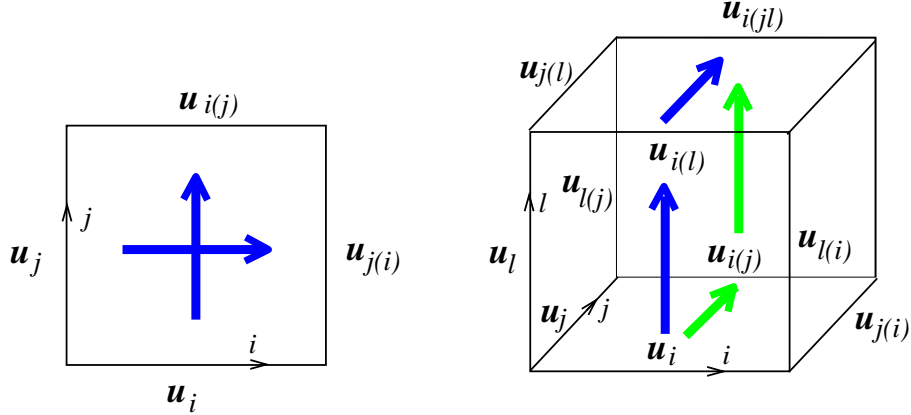}
\caption{The discrete KP map and its three dimensional consistency} 
\label{fig:3D-GD-u}
\end{figure}
We remark that because of infinite number of components of the fields $\mathbf{u}_i = (u_{i,k})_{k\in\ZZ}$ the above three-dimensional consistency of the KP~map~\eqref{eq:KP-u-solved} is equivalent to (local) four-dimensional consistency of the non-commutative Hirota system~\eqref{eq:comp-U}. Such a form is however convenient to demonstrate three-dimensional consistency of Gel'fand--Dikii lattice equations.

The fields $\mathbf{u}_{i}$ are attached to edges of the $\ZZ^N$ lattice. In \cite{Dol-GD} we studied the vertex form of the equations. Notice that the first part of equations \eqref{eq:KP-u} allows to define potentials $r_{k}$ such that
\begin{equation}
u_{i,k} = r_k^{-1} r_{k(i)},
\end{equation}
while the remaining part gives the map
\begin{equation} \label{eq:KP-r}
 r_{k(ij)}  = (r_{k(j)}^{-1} - r_{k(i)}^{-1})^{-1} r_{k+1}^{-1} (r_{k+1(i)} - r_{k+1(j)} )
\end{equation}
which allows to express the field in the fourth vertex of the elementary quadrilateral in terms of the fields in other three vertices. Again, the value of $\mathbf{r}_{(ijl)} = (r_{k(ijl)})_{k\in\ZZ}$ in the eighth vertex of the combinatorial cube (see Figure~\ref{fig:3D-GD}) can be calculated from initial values $\mathbf{r}$, $\mathbf{r}_{(i)}$, $\mathbf{r}_{(j)}$ and $\mathbf{r}_{(l)}$ in three different ways. It turns out that these three results are identical, i.e. the vertex KP map is multidimensionally consistent.  
\begin{figure}
\includegraphics[width=10cm]{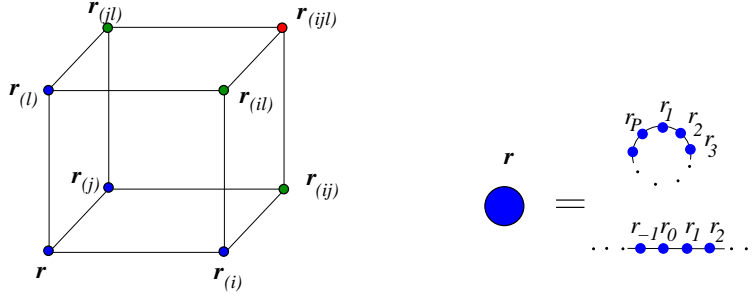}
\caption{Three dimensional consistency of the vertex KP map} 
\label{fig:3D-GD}
\end{figure}

\section{Quadrilateral lattice maps and their $B$ and $C$ reductions}
In this Section we recall \cite{Dol-Des,Dol-WCR-Hirota} the relation between Desargues maps and the multidimensional lattices of planar quadrilaterals \cite{DCN,MQL}. We concentrate on describing the relation between the Hirota system (called also discrete $A$-KP system) and its reductions to the discrete $B$-KP system (the Miwa system \cite{Miwa}) and the discrete $C$-KP system \cite{Kashaev-LMP,Schief-JNMP}. 

\subsection{Quadrilateral lattice maps}
Let $\widehat{N}=2K-1$ be odd, we split the standard basis vectors $(\mathbf{e}_i)_{i=1}^{2K}$ into $K$ pairs $( \mathbf{e}_1, \mathbf{e}_2)$, \dots , $( \mathbf{e}_{2K-1}, \mathbf{e}_{2K})$. 
We pick up the following root vectors $\mathbf{E}_i = \mathbf{e}_{2i-1} - \mathbf{e}_{2i}$, $i=1,\dots ,K$, of the root lattice $Q(A_{2K-1})$. These vectors satisfy the orthogonality relations
\begin{equation*}
( \mathbf{E}_i | \mathbf{E}_j ) = 2 \delta_{ij}, 
\end{equation*}
and generate the ${\mathbb Z}^K$ sub-lattice (with rescaled standard scalar product) in the root lattice $Q(A_{2K-1})$. 

It is not difficult to see that given Desargues map $\Phi:Q(A_{2K-1})\to{\mathbb P}^M({\mathbb D})$ then for arbitrary fixed
$\widehat{n}\in Q(A_{2K-1})$ the four points $\Phi(\widehat{n})$, $\Phi(\widehat{n}+\mathbf{E}_i)$, $\Phi(\widehat{n}+\mathbf{E}_j)$ and $\Phi(\widehat{n}+\mathbf{E}_i+\mathbf{E}_j)$ are coplanar (see Figure~\ref{fig:D-QL-ij}). In what follows we denote shifts in $\mathbf{E}_i$ by subscripts in square brackets, i.e. the four points above are given by $\Phi$, $\Phi_{[i]}$, $\Phi_{[j]}$, and $\Phi_{[ij]}$. To describe the relation in more detail
we introduce the following change of ${\mathbb Z}^{2K-1}$ coordinates in the lattice $Q(A_{2K-1})$
\begin{equation} \label{eq:n-m-l}
\widehat{n} = \sum_{i=1}^{2K-1} n_i \mathbf{\varepsilon}_i =
-\sum_{j=1}^K m_j \mathbf{E}_j + \sum_{j=1}^K \ell_j 
\mathbf{e}_{2j}, 
\end{equation}
i.e. $m_j = n_{2j-1}$, $\ell_j = - (n_{2j-1} + n_{2j})$, $1\leq j\leq K$, 
where we also defined $n_{2K} = -(n_1 + \ldots + n_{2K-1})$, which implies $\ell_1 + \ldots + \ell_K =0$. We have therefore $m= \sum_{j=1}^K m_j \mathbf{E}_j \in Q(B_K)={\mathbb Z}^K$, and $\ell = \sum_{j=1}^K \ell_j  \bar{\mathbf{e}}_i \in Q(A_{K-1})$, where $\bar{\mathbf{e}}_i = \mathbf{e}_{2i}$.
For fixed $\ell\in Q(A_{K-1})$ define the
map $\psi^\ell:{\mathbb Z}^K\to {\mathbb P}^M$ given by
$\psi^\ell(m) = \Phi(\widehat{n})$, where the relation between $n$ and $m$ and $\ell$ is
given above.

As described previously, the points $\psi^\ell$, $\psi^\ell_{[i]}$,  $\psi^\ell_{[j]}$, and $\psi^\ell_{[ij]}$ are coplanar. 
The transformation of the map $\psi^\ell\colon{\mathbb Z}^K\to {\mathbb P}^M$ into 
$\psi^{\ell+\bar{\mathbf{e}}_i -\bar{\mathbf{e}}_j}\colon {\mathbb Z}^K\to {\mathbb P}^M$, is called the Laplace transformation $\mathcal{L}_{ij}$ \cite{DCN,TQL}. Geometrically it is given by intersection of opposite tangent lines of planar quadrilaterals, as visualized in Figure~\ref{fig:D-QL-ij}. The Laplace transformations satisfy relations
\begin{equation*}
\mathcal{L}_{ij} \circ \mathcal{L}_{ji}  = \text{id} \; ,\qquad
\mathcal{L}_{jk} \circ \mathcal{L}_{ij} = \mathcal{L}_{ik}, \qquad
\mathcal{L}_{ki} \circ \mathcal{L}_{ij} = \mathcal{L}_{kj}, 
\end{equation*}
which express the root lattice $Q(A_{K-1})$ domain of definition of the variable $\ell$. We remark that the geometric construction of the Laplace transformations allows to recover from generic quadrilateral lattice map of $\ZZ^K$ the corresponding Desargues map of $Q(A_{2K-1})$.

\begin{figure}
\includegraphics[width=11cm]{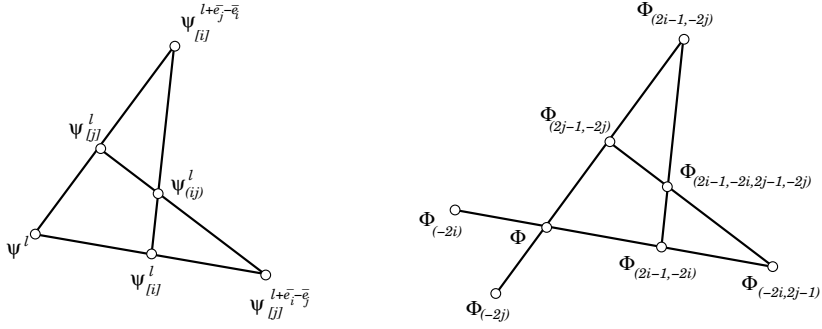}
\caption{Quadrilateral lattice maps and Desargues maps} 
\label{fig:D-QL-ij}
\end{figure}

\subsection{The $\tau$-function form of discrete Darboux equations}
Both shifts in $m\in Q(B_K)=\ZZ^K$ or $\ell\in Q(A_{K-1})$ variables involve double shifts in the original variable $\widehat{n}\in Q(A_{2K-1})$. Let us present the corresponding form~\cite{ABS-octahedron} of the Hirota system~\eqref{eq:H-M}
\begin{equation} \label{eq:H-M-sub}
\tau_{(ij)}\tau_{(kl)} - \tau_{(ik)}\tau_{(jl)} + \tau_{(il)}\tau_{(jk)} = 0, \qquad 1\leq i < j < k < l \leq 2K-1,
\end{equation}
which can be derived from three copies of equations~\eqref{eq:H-M} for triplets $(i,j,k)$, $(i,j,l)$ and $(i,k,l)$. 
\begin{Rem}
Notice that due to the structure of the $A$-type root lattice the six points involved in equation~\eqref{eq:H-M-sub} are vertices of an octahedron Delaunay polytope.
\end{Rem}
Then equation~\eqref{eq:H-M-sub} for indices $(2i-1,2i,2j-1,2j)$, $1\leq i < j < K$, shifted backwards in $(n_{2i},n_{2j})$ reads
\begin{equation} \label{eq:tau-Darboux-1}
\tau^{\ell} \tau^{\ell}_{[ij]} = \tau^{\ell}_{[i]}\tau^{\ell}_{[j]} +  
\tau^{\ell+\bar{\mathbf{e}}_j -\bar{\mathbf{e}}_i}_{[i]} 
\tau^{\ell+\bar{\mathbf{e}}_i -\bar{\mathbf{e}}_j}_{[j]},
\end{equation}
where we used also the "quadrilateral lattice notation" $\tau^{\ell}(m) \leftrightarrow \tau(n)$. By taking indices $(2i,2j,2k-1,2k)$, $1\leq i \neq j \neq k < K$, we obtain \cite{DMMMS,DS-sym,Dol-tau-QL}
\begin{equation} \label{eq:tau-Darboux-2}
\tau^{\ell} \tau^{\ell+\bar{\mathbf{e}}_i -\bar{\mathbf{e}}_j}_{[k]} = 
\tau^{\ell}_{[k]}  \tau^{\ell+\bar{\mathbf{e}}_i -\bar{\mathbf{e}}_j} + \mathrm{sgn}(j-i) \mathrm{sgn}(k-j)\mathrm{sgn}(i-k)
\tau^{\ell+\bar{\mathbf{e}}_i -\bar{\mathbf{e}}_k}_{[k]} \tau^{\ell+\bar{\mathbf{e}}_k -\bar{\mathbf{e}}_j},
\end{equation}
where in the above we presented the form of equations valid for arbitrary ordering of indices $i,j,k$. Little more care should be taken when one of the indices is $K$, but the final result is the same. For example to obtain equation \eqref{eq:tau-Darboux-1}
with $j=K$ we make use of the original Hirota equation~\eqref{eq:H-M} for the triplet $(2i-1,2i,2K-1)$. 

Equations \eqref{eq:tau-Darboux-2} can be rewritten in terms the so called rotation coefficients
\begin{equation}
\beta_{ij}^{\ell} = \mathrm{sgn}(j-i) \left( \frac{\tau^{\ell+\bar{\mathbf{e}}_i -\bar{\mathbf{e}}_j}}{\tau^{\ell}} \right)_{[j]}, \qquad i\neq j,
\end{equation}
as the discrete Darboux equations~\cite{BoKo,MQL} (notice a shift in definition of the rotation coefficients)
\begin{equation} \label{eq:d-Darboux}
\beta_{ij[k]}^{\ell} = \beta_{ij}^{\ell} + \beta_{ik[j]}^{\ell} \beta_{kj}^{\ell} , \qquad i,j,k \quad \text{distinct}. 
\end{equation}
The natural continuous limit of equations \eqref{eq:d-Darboux}, which in modern literature~\cite{KvL} are called also the $K$-wave equations, plays a fundamental role in the theory of conjugate nets~\cite{Darboux-OS}.

We remark that the discrete Darboux equations involve rotation coefficients with fixed value of the Laplace transformation index $\ell\in Q(A_{K-1})$. In what follows we present reductions of quadrilateral lattices maps which lead to equations with fixed $\ell$ on the $\tau$-function level.

\subsection{The $B$-KP and $C$-KP reductions}
Below we present a reduction from the Hirota system to the discrete $B$-KP and $C$-KP systems. Actually, we perform the reduction on the level of the $\tau$-function form \eqref{eq:tau-Darboux-1}-\eqref{eq:tau-Darboux-2} of the discrete Darboux equations, i.e. we already have changed variables according to equation \eqref{eq:n-m-l} and fixed the Laplace transformation variable $\ell_0 \in Q(A_{K-1})$. Then we give a constraint relating the Laplace transformed $\tau$-functions 
$\tau^{\ell_0+\bar{\mathbf{e}}_i -\bar{\mathbf{e}}_j}$ and $\tau^{\ell_0}$. Up to author's knowledge such a discrete $B$-KP constraint was not written down before. We remark that our transition from $2K-1$ Desargues map variables to $K$ variables of the quadrilateral lattice maps corresponds to classical results~\cite{JimboMiwa} where in reduction from KP hierarchy to the $B$- and $C$-KP hierarchies "approximately half" of the original KP times is put to zero.

In \cite{CQL} it was considered an integrable reduction of discrete Darboux equations imposed by the constraint
\begin{equation} \label{eq:CKP-constraint}
\beta_{ij}^{\ell_C}\beta_{jk}^{\ell_C}\beta_{ki}^{\ell_C} = \beta_{ji}^{\ell_C}\beta_{kj}^{\ell_C}\beta_{ik}^{\ell_C} , 
\qquad i,j,k \quad \text{distinct},
\end{equation}  
for certain $\ell = \ell_C$. It was also shown that using the allowed gauge freedom in definition of the $\tau$-function one can bring the constraint to the form (in the present notation)
\begin{equation} \label{eq:CKP-constraint-tau}
\tau^{\ell_C+\bar{\mathbf{e}}_i -\bar{\mathbf{e}}_j}_{[j]} + \tau^{\ell_C+\bar{\mathbf{e}}_j -\bar{\mathbf{e}}_i}_{[i]} = 0,
\qquad i\neq j, 
\end{equation}
which, due to equations~\eqref{eq:tau-Darboux-1}, allowed in \cite{Schief-JNMP} to rewrite the discrete Darboux equations \eqref{eq:d-Darboux} in terms of single $\tau$-function $\tau = \tau^{\ell_C}$ in the form
\begin{equation} \label{eq:CKP} 
\left(\tau_{[i]} \tau_{[jk]} - \tau_{[j]} \tau_{[ik]} +
\tau_{[k]} \tau_{[ij]}  - \tau \tau_{[ijk]} \right)^2 - 
4 \left( \tau_{[i]} \tau_{[jk]} \tau_{[k]} \tau_{[ij]} +
\tau_{[j]} \tau_{[ik]} \tau \tau_{[ijk]} \right) =
4 \tau_{[i]} \tau_{[j]} \tau_{[k]}   \tau_{[ijk]} +
4 \tau \tau_{[ij]} \tau_{[jk]} \tau_{[ik]} ,
\end{equation}
identified there as the superposition principle of $\tau$-functions of  the $C$-KP hierarchy. We mention that equation \eqref{eq:CKP} was obtained earlier in \cite{Kashaev-LMP} on a different basis. Although if the form of equation \eqref{eq:CKP} does not look symmetric, it is invariant with respect to permutations of indices. However, from equation \eqref{eq:CKP} the function $\tau_{[ijk]}$ can be calculated as a root of the second degree polynomial equation, which introduces sign ambiguites and creates potential problems for four dimensional consistency of the equation. This problem was studied in \cite{Atkinson}, and it was solved by re-introducing (essentially) the functions $\tau^{\ell_C+\bar{\mathbf{e}}_i -\bar{\mathbf{e}}_j}_{[j]}$, $i<j$, together with corresponding additional equations.

The discrete $B$-KP system
\begin{equation} \label{eq:BKP-M}
\mu \mu_{[ijk]} = \mu_{[i]} \mu_{[jk]} - \mu_{[j]} \mu_{[ik]} + \mu_{[k]} \mu_{[ij]} , \qquad i<j<k,
\end{equation}
known also as the Miwa equations~\cite{Miwa}, was studied in \cite{BQL} in relation to the so called
$B$-quadrilateral lattice maps. Our goal is to describe the reduction on the $\tau$-function level. From the known relation between the $\tau$-functions of the KP and $B$-KP hierarchies \cite{DKJM,Hirota-book}, one can expect that $\mu^2 = \tau^{\ell_B}$ for certain $\ell_B\in Q(A_{K-1})$. Under such assumption the Miwa system \eqref{eq:BKP-M} implies the following equations, which we write down in the form similar to that of~\eqref{eq:CKP} for $\tau = \tau^{\ell_B}$
\begin{equation} \label{eq:BKP-tau} 
\left[ \left(\tau_{[i]} \tau_{[jk]} - \tau_{[j]} \tau_{[ik]} +
\tau_{[k]} \tau_{[ij]}  - \tau \tau_{[ijk]} \right)^2 - 
4 \left( \tau_{[i]} \tau_{[jk]} \tau_{[k]} \tau_{[ij]} +
\tau_{[j]} \tau_{[ik]} \tau \tau_{[ijk]} \right) \right]^2 =
64 \tau \tau_{[i]} \tau_{[j]} \tau_{[k]} 
\tau_{[ij]} \tau_{[jk]} \tau_{[ik]}   \tau_{[ijk]} ,
\end{equation}
Notice however, that equation \eqref{eq:BKP-tau} is a polynomial of the fourth degree in $\tau_{[ijk]}$, what introduces even more sign ambiguites then in the case of the discrete $C$-KP equation~\eqref{eq:CKP}.

Basing on our previous results~\cite{BQL} on geometric interpretation of the Miwa system we propose a constraint, analogous to~\eqref{eq:CKP-constraint-tau}.
\begin{Prop} \label{prop:BKP}
Given solution $\tau \colon \ZZ^{2K-1} \to \FF$ of the Hirota system \eqref{eq:H-M} such that after transformation to quadrilateral lattice variables \eqref{eq:n-m-l} for certain $\ell_B\in Q(A_{K-1})$ we have
\begin{equation} \label{eq:BKP-constraint-tau}
\left( \tau^{\ell_B+\bar{\mathbf{e}}_i -\bar{\mathbf{e}}_j}_{[j]} - \tau^{\ell_B+\bar{\mathbf{e}}_j -\bar{\mathbf{e}}_i}_{[i]} \right)^2= 4 \tau^{\ell_B}_{[i]} \tau^{\ell_B}_{[j]},
\qquad i\neq j, 
\end{equation}
then:
\begin{enumerate}
\item The function $\tau = \tau^{\ell_B}$  satisfies equation \eqref{eq:BKP-tau}.
\item One can consistently parametrize the constraint \eqref{eq:BKP-constraint-tau} in terms of a
function $\mu \colon \ZZ^K \to \FF$ such that $\mu^2 = \tau^{\ell_B}$, and for $i<j$
\begin{equation*}
\tau^{\ell_B+\bar{\mathbf{e}}_i -\bar{\mathbf{e}}_j}_{[j]}  = - (-1)^{\sum_{i\leq k<j}m_k}
\left(  \mu \mu_{[ij]} + \mu_{[i]} \mu_{[j]} \right) , \qquad 
\tau^{\ell_B+\bar{\mathbf{e}}_j -\bar{\mathbf{e}}_i}_{[i]}   = - (-1)^{\sum_{i\leq k<j}m_k}
\left(  \mu \mu_{[ij]} - \mu_{[i]} \mu_{[j]} \right) .
\end{equation*}
\item Under such parametrization the Darboux equations reduce to the Miwa system of equations~\eqref{eq:BKP-M} for $\mu$. 
\end{enumerate}
\end{Prop}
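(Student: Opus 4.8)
The plan is to establish the second and third claims first and then to obtain the first as an algebraic corollary. Throughout I write $\tau=\tau^{\ell_B}$ and abbreviate the two Laplace-transformed functions occurring in \eqref{eq:BKP-constraint-tau} by $a_{ij}=\tau^{\ell_B+\bar{\mathbf{e}}_i-\bar{\mathbf{e}}_j}_{[j]}$ and $a_{ji}=\tau^{\ell_B+\bar{\mathbf{e}}_j-\bar{\mathbf{e}}_i}_{[i]}$. The point of departure is to read the hypothesis \eqref{eq:BKP-constraint-tau} as $(a_{ij}-a_{ji})^2=4\tau_{[i]}\tau_{[j]}$ and to combine it with \eqref{eq:tau-Darboux-1} in the form $a_{ij}a_{ji}=\tau\tau_{[ij]}-\tau_{[i]}\tau_{[j]}$: adding $4a_{ij}a_{ji}$ to the first identity yields the companion relation $(a_{ij}+a_{ji})^2=4\tau\tau_{[ij]}$. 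Hence both $\tau_{[i]}\tau_{[j]}$ and $\tau\tau_{[ij]}$ are perfect squares, and $a_{ij}$, $a_{ji}$ are the half-sum and half-difference of their square roots; this is precisely the data from which $\mu$ will be built.

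For claim (2) I would first choose, at every vertex of $\ZZ^K$, a square root $\mu$ of $\tau$, so that $\mu_{[i]}\mu_{[j]}$ and $\mu\mu_{[ij]}$ are square roots of $\tau_{[i]}\tau_{[j]}$ and $\tau\tau_{[ij]}$. The two identities above then give $a_{ij}-a_{ji}=-2(-1)^{\sigma_{ij}}\mu_{[i]}\mu_{[j]}$ and $a_{ij}+a_{ji}=-2(-1)^{\sigma_{ij}}\mu\mu_{[ij]}$ with $\sigma_{ij}=\sum_{i\le k<j}m_k$, and solving for $a_{ij}$, $a_{ji}$ reproduces the stated parametrization. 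The genuine content is that the pointwise sign of $\mu$ can be fixed \emph{coherently} on all of $\ZZ^K$: the products $\mu_{[i]}\mu_{[j]}$ determine $\mu$ only up to a global sign, and only provided the prescribed signs satisfy the path-independence condition around every elementary loop. I would verify this by checking the cocycle condition on the elementary squares and cubes of $\ZZ^K$, the combinatorial factor $(-1)^{\sigma_{ij}}$ being exactly what absorbs the orientation mismatch coming from the ordering of the indices in \eqref{eq:tau-Darboux-1}--\eqref{eq:tau-Darboux-2}.

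For claim (3) I would substitute the parametrization into \eqref{eq:tau-Darboux-2}, using $a_{ik}=\tau^{\ell_B+\bar{\mathbf{e}}_i-\bar{\mathbf{e}}_k}_{[k]}$ together with the shift relations $\tau^{\ell_B+\bar{\mathbf{e}}_i-\bar{\mathbf{e}}_j}=(a_{ij})_{[-j]}$. After replacing every Laplace-transformed factor by its $\mu$-expression and using $\tau=\mu^2$ to cancel common factors, \eqref{eq:tau-Darboux-2} collapses to a three-term relation among the products $\mu\mu_{[ijk]}$, $\mu_{[i]}\mu_{[jk]}$, $\mu_{[j]}\mu_{[ik]}$, $\mu_{[k]}\mu_{[ij]}$ attached to the four space diagonals of the $(i,j,k)$-cube; the prefactor $\mathrm{sgn}(j-i)\mathrm{sgn}(k-j)\mathrm{sgn}(i-k)$ of \eqref{eq:tau-Darboux-2} conspires with the factors $(-1)^{\sigma}$ to produce exactly the alternating signs of the Miwa system \eqref{eq:BKP-M}. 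Granting this, claim (1) is purely formal: putting $A=\mu_{[i]}\mu_{[jk]}$, $B=\mu_{[j]}\mu_{[ik]}$, $C=\mu_{[k]}\mu_{[ij]}$, $D=\mu\mu_{[ijk]}$, the Miwa relation reads $D=A-B+C$, while $\tau=\mu^2$ gives $\tau_{[i]}\tau_{[jk]}=A^2$ and so on, with the eightfold product in \eqref{eq:BKP-tau} equal to $(ABCD)^2$. One then checks the two elementary identities $A^2-B^2+C^2-D^2=2(A-B)(B-C)$ and $(A-B)^2(B-C)^2-A^2C^2-B^2D^2=-2ABCD$; inserting them into \eqref{eq:BKP-tau} reduces both sides to $64(ABCD)^2$.

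The main obstacle is the global sign consistency in claim (2): each two-point product pins down $\mu$ only up to sign, and one must show that the prescribed signs, carried by the combinatorial factor $(-1)^{\sigma_{ij}}$, close up around every elementary loop so that a single-valued $\mu$ with $\mu^2=\tau$ really exists. Once these signs are under control, the derivation in claim (3) is bookkeeping whose only subtlety is matching the $\mathrm{sgn}$ prefactor of \eqref{eq:tau-Darboux-2} against them, and claim (1) follows by the displayed algebra.
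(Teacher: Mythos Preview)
Your argument is correct and follows essentially the same route as the paper: both derive the companion relation $(a_{ij}+a_{ji})^2=4\tau\tau_{[ij]}$ from \eqref{eq:BKP-constraint-tau} and \eqref{eq:tau-Darboux-1}, solve the resulting quadratic to obtain the $\mu$-parametrization, and then substitute into \eqref{eq:tau-Darboux-2} (after a shift in $m_j$) to recover the Miwa system. The one genuine difference is the order in which claim~(1) is handled. The paper checks \eqref{eq:BKP-tau} \emph{directly} from \eqref{eq:tau-Darboux-1}--\eqref{eq:tau-Darboux-2} together with the constraint, before $\mu$ ever appears; you instead postpone (1) and deduce it as an algebraic corollary of the Miwa relation via the identities $A^2-B^2+C^2-D^2=2(A-B)(B-C)$ and $(A-B)^2(B-C)^2-A^2C^2-B^2D^2=-2ABCD$. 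Your route has the advantage of making the implication $\text{Miwa}\Rightarrow\eqref{eq:BKP-tau}$ completely explicit, whereas the paper's route keeps (1) logically independent of the parametrization in (2). On the sign-coherence issue for $\mu$ you are more explicit than the paper, which simply fixes the signs by appeal to \cite{BQL}; your cocycle-on-elementary-cubes description is the right way to justify that step.
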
 
\begin{proof}
The first point can be checked directly from equations~\eqref{eq:tau-Darboux-1}-\eqref{eq:tau-Darboux-2} supplemented by the constraint~\eqref{eq:BKP-constraint-tau}. It is convenient to note two other equivalent forms of the constraint
\begin{equation*}
\left( \tau^{\ell_B+\bar{\mathbf{e}}_i +\bar{\mathbf{e}_j}}_{[j]} + \tau^{\ell_B+\bar{\mathbf{e}}_j -\bar{\mathbf{e}}_i}_{[i]} \right)^2= 4 \tau^{\ell_B} \tau^{\ell_B}_{[ij]} \qquad \text{or} \qquad 
\left( \tau^{\ell_B+\bar{\mathbf{e}}_i +\bar{\mathbf{e}}_j}_{[j]} \right)^2 + \left( \tau^{\ell_B+\bar{\mathbf{e}}_j -\bar{\mathbf{e}}_i}_{[i]} \right)^2 = 2 \left( \tau^{\ell_B}_{[i]} \tau^{\ell_B}_{[j]} +  \tau^{\ell_B} \tau^{\ell_B}_{[ij]} \right) .
\end{equation*}
To demonstrate the second part we notice that equations \eqref{eq:BKP-constraint-tau} and \eqref{eq:tau-Darboux-1} imply that 
$\tau^{\ell_B+\bar{\mathbf{e}}_i +\bar{\mathbf{e}}_j}_{[j]}$ and $\tau^{\ell_B+\bar{\mathbf{e}}_j -\bar{\mathbf{e}}_i}_{[i]} $ are the roots $x_{1,2}$ of the following second degree equation (with the sign ambiguity)
\begin{equation*}
x^2 \pm  2 x \sqrt{\tau^{\ell_B} \tau^{\ell_B}_{[ij]} } + \tau^{\ell_B} \tau^{\ell_B}_{[ij]} -  \tau^{\ell_B}_{[i]} \tau^{\ell_B}_{[j]} =0, \qquad \text{i.e.} \qquad
x_{1,2} = \pm \left( \sqrt{\tau^{\ell_B} \tau^{\ell_B}_{[ij]} } \pm \sqrt{\tau^{\ell_B}_{[i]} \tau^{\ell_B}} \right) ,
\end{equation*}
which give desired form of the function $\mu$, where we fix the signs following~\cite{BQL}. Finally, for arbitrary ordering of indices $i,j,k$, equations \eqref{eq:tau-Darboux-2} shifted in $m_j$ variable give directly the Miwa system~\eqref{eq:BKP-M}. 
\end{proof}

As a consequence of the Miwa system \eqref{eq:BKP-M} on can derive its double-shift version
\begin{equation} \label{eq:DKP}
\mu \mu_{[ijkl]} - \mu_{[ij]} \mu_{[kl]} + \mu_{[ik]} \mu_{[jl]} - \mu_{[ik]} \mu_{[jl]} = 0, \qquad 1\leq i<j<k<l \leq K,
\end{equation}
obtained in  \cite{Bobenko-IDS} while studying four dimensional consistency of \eqref{eq:BKP-M}. We discuss this equation in the spirit of integrable systems on root lattices, where the Hirota equations (discrete $A$-KP system) are defined on root lattices of type $A$, the discrete $C$-KP \eqref{eq:CKP} system and the discrete $B$-KP system in its Miwa form \eqref{eq:BKP-M} or in the form \eqref{eq:BKP-tau} presented here are defined on root lattices of type $B$. Then equation \eqref{eq:DKP} involves independent variables from the root lattice $Q(D_K)$, which is obtained from $\ZZ^K=Q(B_K)$ by first colouring its points alternately white and black, and then taking the black points (checkerboard lattice). Some integrable systems on such sub-lattice obtained as reductions of discrete $B$-KP equation or its (discrete Moutard) linear problem were considered in~\cite{Dol-Gr-Nie-San, Dol-Nie-San-tr-hon, San-Dol-Nie-Toda-sq-tr}. We also mention that both constraints \eqref{eq:CKP-constraint-tau} and \eqref{eq:BKP-constraint-tau} impose conditions on Laplace transforms of the reduced quadrilateral lattice maps; this point of view was analysed 
in~\cite{Nieszporski-Laplace}.

\section{Periodic reduction of Desargues maps}
\label{sec:per-red}
In this Section we return to non-commuting dependent variables. Our goal is to derive integrable systems of lower dimension. We start from imposing periodicity constraint 
$\Phi(n,k+P) = \Phi(n,k)$ on the level of Desargues maps, and study its implications on the level of the functions 
$\bphi_k$ and $u_{i,k}$. 
\subsection{Discrete non-isospectral modified Gel'fand--Dikii systems}
On the level of the homogeneous coordinates we must have functions (the monodromy factors) $\lambda_k\colon \ZZ^N\to\DD^\times$ such that, $\bphi_{k+P} = \bphi_k \lambda_k$. Since the factors do not change the structure of the linear problem \eqref{eq:lin-KP-phi} then they satisfy the constraint $\lambda_{k+1} = \lambda_{k(i)}$, $i=1,\dots ,N$.
The corresponding transformation of the potentials $u_{i,k}$ is given by
\begin{equation}
u_{i,k+P} = \lambda_k^{-1} u_{i,k}\lambda_{k(i)}.
\end{equation}

The periodicity condition implies then the following linear system
\begin{equation} \label{eq:Lm-kp-K}
\left(\bphi_1, \bphi_2, \dots , \bphi_P \right)_{(i)}= 
\left(\bphi_1, \bphi_2, \dots , \bphi_P \right)
\left(  \begin{array}{ccccc}  
-u_{i,1} & 0  & \cdots & 0 & \lambda_1 \\
1 & -u_{i,2}  & 0 & \hdots  & 0 \\
0 & 1 & \ddots &  &  \vdots \\
\vdots &  & & -u_{i,P-1}  & 0 \\
0 & 0 & \ \hdots  & 1 & -u_{i,P}  \end{array} \right) ,
\end{equation}
where $\lambda_1$ is a function of the variable $n_\sigma = n_1 + \dots + n_N$ and plays the role of variable spectral parameter. The corresponding reduction of the KP map takes the form
\begin{align*} 
u_{i,k(j)} & = ( u_{i,k} - u_{j,k})^{-1} u_{i,k} ( u_{i,k+1} - u_{j,k+1}), \qquad k=1, \dots , P-1, \qquad i\neq j,\\
u_{i,P(j)} & = ( u_{i,P} - u_{j,P})^{-1} u_{i,P} \lambda_1^{-1}( u_{i,1} - u_{j,k+1}) \lambda_{1(\sigma)}.
\end{align*} 
and is three-dimensionally consistent.

The periodic reduction of Desargues maps results on the level of the fields $r_k$ as the constraint $r_{k+P} = r_k \lambda_k$, and gives the multidimensionally consistent map 
\begin{align} \label{eq:GD-r-mu}
 r_{k(ij)} & = (r_{k(j)}^{-1} - r_{k(i)}^{-1})^{-1} r_{k+1}^{-1} (r_{k+1(i)} - r_{k+1(j)} ), \qquad k=1, \dots , P-1,\\ 
 r_{P(ij)} & = (r_{P(j)}^{-1} - r_{P(i)}^{-1})^{-1} r_{1}^{-1} \lambda_1^{-1} (r_{1(i)} - r_{1(j)} ) \lambda_{1(\sigma)}, \qquad \qquad i\neq j .
\end{align}

\subsection{Commutative specialization and simplest reductions}
Below we show how in the commutative case 
there show up functions of single variables, the presence of which is indispensable in making further reductions to Painlev\'{e} type dynamical systems. 
We perform the discussion on the level of the edge functions $u_{i,k}$, while in \cite{Dol-GD} we worked with the vertex fields $r_{k}$.

Define $\mathcal{U}_{i,k}= u_{i,k} u_{i,k+1} \dots u_{i,k+P-1} \lambda_k^{-1}$, then it is easy to check that
\begin{equation}
\mathcal{U}_{i,k+1} = \mathcal{U}_{i,k}, \qquad \mathcal{U}_{i,k(j)} = \mathcal{U}_{i,k}, \qquad j\neq i,
\end{equation}
i.e. $\mathcal{U}_{i,k}$ does not depend on the index $k$ (which we skip from now on) and it is a function of the single variable $n_i$. 
On the level of the vertex functions $r_k$ we can then define 
$\mathcal{R}_k = r_k r_{k+1} \dots r_{k+P-1} \mathcal{M}_k^{-1}$, where $\mathcal{M}_{k+1} = \lambda_k \mathcal{M}_k$. Then again $\mathcal{R}_k$ is independent of $k$ (which we skip from now on) and factorizes into a product of functions of single arguments such that $\mathcal{R}_{(i)} = \mathcal{U}_{i} \mathcal{R}$. Using these facts we replace the system \eqref{eq:GD-r-mu} by a new one involving $P-1$ unknown functions (we may chose $r_1, \dots , r_{P-1}$ and certain numbers of functions of single variables.

Let us present an example \cite{Dol-GD} of the non-isospectral non-autonomous lattice modified KdV system which can be obtained in the simplest case $P=2$. To match with known form of its isospectral version we introduce functions 
$\mathcal{G} = (\mathcal{R})^{1/P}$, $\mathcal{F}_i = (\mathcal{U}_i)^{1/P}$, and we express $r_1$ and $r_2$ in terms of single field $x$ as follows 
\begin{equation*}
r_1 = x\mathcal{G} , \qquad r_2 = \frac{\mathcal{G} \mathcal{M}_1}{x} , 
\end{equation*}
then equations \eqref{eq:GD-r-mu} reduce to  
\begin{equation} \label{eq:l-mKdV-ni} 
x_{(ij)} = \lambda_1 \, x \;\frac{x_{(i)}\mathcal{F}_j - x_{(j)}\mathcal{F}_i}{x_{(j)}\mathcal{F}_j - x_{(i)}\mathcal{F}_i} , \qquad i\neq j.
\end{equation}

Similar procedure in the periodic reduction with $P=3$ gives a two-component system, which after elimination of one field gives rise \cite{Dol-GD} to the following lattice equation, which is non-isospectral and non-autonomous version of the lattice modified Boussinesq equation \cite{FWN-lB}
\begin{equation*}
\left( \frac{ \lambda_1 }{  y_{(ij)} }  
\left( y_{(i)} \mathcal{F}_j - y_{(j)} \mathcal{F}_i \right) \right)_{(ij)} 
- \lambda_1 y 
\left( \frac{ \mathcal{F}_j }{ y_{(j)} } -  \frac{ \mathcal{F}_i }{ y_{(i)} } \right) 
= 
\left(  \frac{ y_{(ij)} }{ \lambda_1 y } \, 
\frac{ y_{(j)} \mathcal{F}_j^2 -y_{(i)} \mathcal{F}_i^2 }{ y_{(i)} \mathcal{F}_j - y_{(j)} \mathcal{F}_i } \right)_{(j)}  - 
\left(  \frac{ y_{(ij)} }{ \lambda_1 y } 
\frac{ y_{(i)} \mathcal{F}_i^2 -y_{(j)} \mathcal{F}_j^2 }{ y_{(j)} \mathcal{F}_i - y_{(i)} \mathcal{F}_j } \right)_{(i)}  , \quad i\neq j.
\end{equation*}

Equations of the non-isospectral and non-autonomous modified lattice Gel'fand--Dikii hierarchy can be also recovered on the $\tau$-functions level, but we have to start from the non-autonomous Hirota system~\eqref{eq:H-M-na} with the distinguished last variable. In the specification
\begin{equation*}
\mathcal{A}_i = \mathcal{F}_i + 1, \qquad 1\leq i \leq N, \qquad \mathcal{A}_{N+1} =1,
\end{equation*}
we obtain first the $\tau$-function formulation of the (non-autonomous and commutative) KP hierarchy
\begin{equation}
\mathcal{F}_j\tau_{k(i)} \tau_{k+1 (j)} - \mathcal{F}_i\tau_{k(j)} \tau_{k+1 (i)} + (\mathcal{F}_i - \mathcal{F}_j) \tau_{k+1} \tau_{k(ij)} = 0, \qquad i\neq j.
\end{equation}
Its consequence
\begin{equation}
\left( \left( \frac{\tau_{k}}{\tau_{k+1}}\right)_{(j)} \mathcal{F}_i - \left( \frac{\tau_{k}}{\tau_{k+1}}\right)_{(i)} \mathcal{F}_j \right)  \left( \frac{\tau_{k+1}}{\tau_{k}}\right)_{(ij)} =
\frac{\tau_{k+1}}{\tau_{k+2}}
\left( \left( \frac{\tau_{k+2}}{\tau_{k+1}}\right)_{(i)} \mathcal{F}_i - 
\left( \frac{\tau_{k+2}}{\tau_{k+1}}\right)_{(j)} \mathcal{F}_j \right) ,
\end{equation}
after identification 
\begin{equation*}
r_k = \frac{\tau_{k+1}}{\tau_{k}}\mathcal{G}, \qquad \mathcal{G}_{(i)} = \mathcal{F}_i \mathcal{G},
\end{equation*}
gives equations \eqref{eq:KP-r}. By imposing (quasi)-periodicty condition 
\begin{equation}
\tau_{k+P} = \tau_k \mathcal{M}_k, \qquad \mathcal{M}_{k+1} = \mathcal{M}_{k(i)} = \lambda_k \mathcal{M}_k,
\end{equation} 
we obtain (commutative version of) equations~\eqref{eq:GD-r-mu}.

\subsection{Reduction to $q$-Painleve equation of type $A_2+A_1$}
In the last Section we conclude our discussion of Desargues maps and the Hirota system, and of their various reductions by presenting the final reduction to the second Painlev\'{e} equation. According to Kruskal~\cite{GrammaticosRamani-review} 
Painlev\'{e} equations are located on a borderline between trivial integrability (usually linearisability) and non-integrability. They posses very interesting peculiar properties and have found numerous applications, see for example collection of articles in~\cite{Painleve-Conte}. It is well known that all the six Painlev\'{e} equations can be obtained as reductions of partial differential equations~\cite{AblowitzSegur,Adler,CGM}. However, in spite of various successful attempts
 \cite{FWN-lB,GRSWC,KNY-qKP,NRGO,RamaniGrammaticosJoshi} including also the most recent ones \cite{Ormerod,OKQ}, still there does not exist such a procedure for all the discrete Painlev\'{e} equations as classified in~\cite{Sakai}. In such a reduction procedure non-isospectral versions of integrable systems may bring additional parameters~\cite{LeviRagniscoRodriguez}.

Consider the non-isospectral non-autonomous modified lattice KdV system \eqref{eq:l-mKdV-ni}, which we write in the form
\begin{equation} \label{eq:lmKdV-ij}
x_{(j)} x_{(ij)} \mathcal{F}_j + \lambda x x_{(j)}\mathcal{F}_i =  x_{(i)} x_{(ij)} \mathcal{F}_i + \lambda x x_{(i)}\mathcal{F}_j, \qquad i\neq j,
\end{equation}
where we put $\lambda = \lambda_1$. 
Following \cite{KNY-qKP,RamaniGrammaticosJoshi} let us impose additional self-similarity constraint 
$x_{(1,2,\dots, N)}= \gamma x$, where $\gamma$ is a function to be determined from the consistency of the reduction with the lattice equation itself. After shifting equation \eqref{eq:lmKdV-ij} in all variables we 
 reproduce the original equation  \eqref{eq:lmKdV-ij} if 
\begin{enumerate}
\item $\gamma$ is a function of a single variable $n_\sigma = n_1 + n_2 + \dots + n_N$, and of period two,
\item $\lambda$ is a function of period $N$,
\item functions $\mathcal{F}_i$ are of the form $\mathcal{F}_i(n_i) = a_i q^{n_i}$ with non-zero constants $a_i$, $i=1,2,\dots ,N$, and $q$ common for all $i$.
\end{enumerate}
We will describe in detail the case $N=3$.
Equation \eqref{eq:lmKdV-ij} in variables $(n_1,n_2)$ shifted in $n_3$ gives, under the similarity constraint,  
\begin{equation}
\gamma_{(\sigma)} \frac{x_{(-1)}}{x_{(13)} } = 
\frac{ \lambda_{(\sigma)} x_{(3)} {\mathcal{F}}_2 + \gamma x {\mathcal{F}}_1 }
{{\gamma x \mathcal{F}}_2 + \lambda_{(\sigma)} x_{(3)} u {\mathcal{F}}_1}.
\end{equation}
Similarly, equation \eqref{eq:lmKdV-ij} in variables $(n_1,n_3)$ leads to
\begin{equation}
\frac{x_{(3)}}{x_{(1)} }  = 
\frac{ \lambda x {\mathcal{F}}_3 + x_{(13)}{\mathcal{F}}_1 }{x_{(13)}{\mathcal{F}}_3 + \lambda x {\mathcal{F}}_1}.
\end{equation}

We take $n_1$ as independent variable denoting by upper (lower) tilde $\tilde{}\,$ the corresponding forward (backward) shift, and we treat other variables as symmetry parameters. Define the "time function" $t$ and two dependent variables $f$ and $g$  by
\begin{equation*}
t = \frac{\mathcal{F}_3}{\mathcal{F}_1} , \quad \text{i.e.} \quad \tilde{t} = \frac{1}{q} t,
\qquad
f =  \frac{\lambda x}{x_{(13)} }, \qquad g =  \frac{\tilde{\lambda} x_{(3)}}{\gamma x},
\end{equation*}
which gives the standard form of the asymmetric $q-P_{II}$ equation equation with the symmetry group being the extended affine Weyl group of type $A_2+ A_1$
\begin{equation} \label{eq:q-P-III}
g \tilde{g}  = \frac{b}{f} \frac{(1 + f t)}{(t + f)}, \qquad 
f \underaccent{\tilde}{f}  = \frac{b}{g} \frac{(1 + gct)}{(ct + g)},
\end{equation}
where the parameters $b$ and $c$ read
\begin{equation*}
b = \frac{\lambda \tilde{\lambda} \tilde{\tilde{\lambda}}}{\gamma \tilde{\gamma}} , \qquad c = \frac{\mathcal{F}_2}{\mathcal{F}_3}.
\end{equation*}
Notice that the parameter $b$ is here an invariant combination of two periodic functions, which results in the previously known equation, i.e. the presence of the non-isospectrality not always gives something new.  

It is known that to obtain symmetric form of $q-P_{II}$ equation we consider the doubled lattice $\ZZ \cup (\ZZ + \frac{1}{2})$ with the corresponding half increment (denoted by hat $\hat{~}$) such that $\hat{t} = \frac{1}{\sqrt{q}} t$. Define the field $X$ on the doubled lattice by $X(n_1) = f(n_1)$, and $X(n_1 - \frac{1}{2}) = g(n_1)$. If we impose the reduction condition
$c= \sqrt{q}$ then we obtain single equation on the doubled lattice
\begin{equation}
\hat{X} \underaccent{\hat}{X} = b \frac{1 + tX}{X(t+X)},
\end{equation}
which is the standard symmetric form of $q-P_{II}$ equation. In the limit of small $\delta$, when we assume
\begin{equation}
b = e^{\alpha\delta^3}, \qquad q = e^{\delta^3/2}, \qquad t = -2 e^{ - z \delta^2/4 - \alpha\delta^3/4},
\qquad X = (1 + \delta w) e^{ - z \delta^2/4 + \alpha\delta^3/4},
\end{equation}
it gives the second Painlev\'e equation $P_{II}$
\begin{equation*}
w^{\prime\prime}(z) = 2w(z)^3 + z w(z) + \alpha.
\end{equation*}
 
Notice that originally equation \eqref{eq:q-P-III} was obtained in \cite{KruskalTamizhmaniGrammaticosRamani} as asymmetric generalization of the $q-P_{II}$ equation~\cite{RamaniGrammaticos}, and it admits a continuous limit to the Painlev\'{e}~III equation. For its interesting properties and special function solutions we refer to~\cite{KajiwaraKimura}.


\begin{theacknowledgments}
  The research was supported in part by Polish Ministry of Science and Higher Education grant No.~N~N202~174739. Author would like to thank the organizers of the 2nd International Workshop on Nonlinear and Modern Mathematical Physics for invitation and support.
\end{theacknowledgments}

\bibliographystyle{aipproc}   

\end{document}